\documentclass[10pt]{article}
\usepackage{dcolumn}% Align table columns on decimal point
\usepackage{bm}% bold math
\usepackage{hyperref}
\usepackage{verbatim}       % Defines \begin{comment}, \end{comment}

\usepackage{color}
\usepackage{graphicx}

\usepackage{amsthm}
\usepackage{amssymb}

\usepackage{amstext}
\usepackage{psfrag}
\usepackage{amsmath}
\usepackage{amsfonts}
\usepackage{units}
\usepackage{mathrsfs}

\newcommand{\p}{\partial}

\newtheorem{theorem}{Theorem}[section]
\newtheorem{corollary}[theorem]{Corollary}
\newtheorem{lemma}[theorem]{Lemma}
\newtheorem{proposition}[theorem]{Proposition}
\theoremstyle{definition}
\newtheorem{definition}[theorem]{Definition}
\theoremstyle{remark}
\newtheorem{remark}[theorem]{Remark}

\newtheorem*{claim*}{Claim}  % unnumbered version

\begin{document}

\title{Global hyperbolicity meets order completeness}

\author{E. Minguzzi\footnote{Dipartimento di Matematica, Universit\`a degli Studi di Pisa,  Via
B. Pontecorvo 5,  I-56127 Pisa, Italy. E-mail:
ettore.minguzzi@unipi.it, ORCID:0000-0002-8293-3802}}

%\date{August 2026, chvsgl.tex}
\date{}
\maketitle

\begin{abstract}
\noindent
Recently, an order completeness property has attracted attention in some low regularity spacetime geometry literature, where it was named {\em chronocompleteness}.
In this work we prove that, in the framework of standard Lorentzian geometry, this property is equivalent to global hyperbolicity.
The equivalence of global hyperbolicity with other more traditional forms of order completeness is also established.
\end{abstract}

\section{Introduction}

Some years ago Martin and Panangaden \cite{martin06} introduced an abstract order theoretic generalization of globally hyperbolic spacetimes. A key property of those spaces was a kind of order completeness, meaning that non-decreasing sequences
admitting an upper bound had a supremum (and a limit in a suitable topology).

Recently, this type of property has been reconsidered  in optimal transport approaches to  low regularity   Lorentzian geometry \cite{beran24,mondino25,gigli25,braun26,ohanyan26}.

In a standard (smooth or sufficiently regular, say $C^{1,1}$, metric $g$) Lorentzian setting  the following definition  was adopted \cite{braun26,ohanyan26}

\begin{definition}
The spacetime $(M,g)$ is said to be {\em future chronocomplete} if every chronologically increasing sequence $x_n$ in $M$, $x_n \ll x_{n+1}$, which is upper bounded, in the sense that  there is
$x^+\in M$  with $x_n \ll x^+$ for every $n \in \mathbb{N}$, converges to some point of $M$.

The definition of {\em past chronocomplete} is given analogously in a time dual fashion. A spacetime is {\em chronocomplete} if it is both past and future chronocomplete.

Analogous notions of {\em causalcompleteness} are introduced, where $\ll$ is replaced by the causal relation $\le$.
\end{definition}

It was soon pointed out that Martin and Panangaden had already proved that globally hyperbolic spacetimes are chronocomplete and causalcomplete \cite{martin06,braun26}. As the result is a bit disguised in the language of domain theory, help in reading it is provided in the Appendix. In any case, when framed in the language of causality theory the proof of the implications is just a  line. Indeed, as global hyperbolicity implies the non-total imprisoning property, a timelike curve $\gamma: [0,+\infty)\to M$, $\gamma(n)=x_n$, connecting the points cannot be future inextendible, as it would be imprisoned in the compact set $J^+(x_0)\cap J^-(x^+)$, thus it has limit $\lim_{t\to \infty} \gamma(t)=\bar x$ which implies $x_n\to \bar x$ (see \cite[Thm.\ A.4 (i)$\Rightarrow$(iv)]{ohanyan26} for another simple argument).

The problem of the converse implications remained open though the general impression, expressed  by Stefan Suhr and others, was that these  properties were likely equivalent to global hyperbolicity. Although the first guess was that the result had to be more or less straightforward, it remained elusive. Recently, Ohanyan and S\'alamo Candal proved it under closure of the causal relation \cite{ohanyan26}.

Unfortunately, the latter condition is quite strong: it can be recalled that in the relevant 4-dimensional spacetime case, if there is at least one point through which there does not pass a closed  timelike curve (i.e.\ the spacetime is non-totally vicious) --- the weakest causality condition --- the closure of the causal relation implies causal simplicity \cite{minguzzi19c}, a property which sits just below global hyperbolicity in the causal ladder of spacetimes \cite{minguzzi18b}.
%In other words, in order to prove global hyperbolicity from  chronocompleteness they impose a condition that is almost as strong as global hyperbolicity itself.

In this work we establish the equivalence between chrono/causal completeness, in the sense of the above definition, and global hyperbolicity, that is, we prove the reverse implication without imposing  any other assumptions.\footnote{We note that the equivalence of chronocompleteness and causalcompleteness was directly proved in \cite[Lemma 2.A]{ohanyan26},  while in this work it will be derived from their equivalence with global hyperbolicity.}

This is important because, as noted in \cite{ohanyan26}, the result mirrors the equivalence established by the Hopf–Rinow theorem for Riemannian manifolds between the Heine-Borel property (open balls are relatively compact) and completeness. A  link between a type of Heine-Borel property and global hyperbolicity was established in \cite{minguzzi25b}; thus, the result of this work recovers again the equivalence between the Heine-Borel property and completeness, but this time in a spacetime framework. Specifically, from the result of \cite{minguzzi25b} and Thms.\  \ref{btsd} and \ref{cnpf} below, we get

\begin{theorem} \label{nzaw}
On a spacetime $(M,g)$ the following properties are equivalent:
\begin{itemize}
\item[($\alpha$)] the topology generated by
the chronological diamonds is Hausdorff and any chronological diamond is relatively
compact in that topology (Heine-Borel-type property),
\item[($\beta$)] future chronocompleteness (Dedekind-type completeness property).
\item[($\gamma$)] $(M,K)$ is a poset and every non-empty upper bounded directed set has a supremum (Dedekind-type completeness property).
\end{itemize}
\end{theorem}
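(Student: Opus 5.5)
The plan is to go through global hyperbolicity and show that each of ($\alpha$), ($\beta$), ($\gamma$) is equivalent to it; then the three are equivalent to each other. For ($\alpha$) I invoke the result of \cite{minguzzi25b}. There, global hyperbolicity is characterized by the chronological diamonds $I^+(p)\cap I^-(q)$ generating a Hausdorff topology in which every diamond is relatively compact. So ($\alpha$)$\Leftrightarrow$ global hyperbolicity is quoted rather than reproved. The one thing to check is that the topology in ($\alpha$) is the same as the one in \cite{minguzzi25b}, namely the one with the diamonds as subbasis. Then ($\alpha$)$\Leftrightarrow$($\beta$) reduces to showing that future chronocompleteness is equivalent to global hyperbolicity, which is the content of Thm.\ \ref{btsd}.

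\textbf{Global hyperbolicity implies ($\beta$).} This is the one-line argument recalled in the introduction. Join the points $x_n$ by a future-directed timelike curve $\gamma$. If $\gamma$ were future inextendible it would be imprisoned in the compact set $J^+(x_0)\cap J^-(x^+)$, which non-total imprisonment forbids. So $\gamma$ has an endpoint, and $x_n$ converges to it.

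\textbf{($\beta$) implies global hyperbolicity.} This is the direction I expect to be the main obstacle, and I would use the ladder in three steps.

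First, I show causality. If there were a closed causal curve, I would try to extract from it (or from a closed timelike curve, where it is easier) a chronologically increasing bounded sequence that has no limit. For a closed timelike curve through $p$, the sequence obtained by going round the loop repeatedly while shifting slightly along some non-recurrent direction should do this.

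Second, I show strong causality. A failure at $p$ gives a family of timelike curves that leave and return arbitrarily close to $p$. By concatenating these I build an increasing sequence bounded by a point in $I^+(p)$ whose accumulation set is not a single point, so it does not converge.

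Third, I show compactness of $J^+(p)\cap J^-(q)$ via the limit curve theorem. Suppose there is an inextendible causal curve imprisoned in the diamond. Perturbing it into a timelike curve in $I^+(p')\cap I^-(q')$ with $p'\ll p$ and $q\ll q'$, its points sampled along a sequence give an upper-bounded chronological chain. By ($\beta$) the chain converges, so the curve is extendible, a contradiction.

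The delicate part is the perturbation from causal to timelike curves while keeping the bound. I would handle it by pushing the curve slightly to the future with a timelike vector field. I would also watch that convergence of a subsequence of points does not by itself give extendibility of the curve. For that step I would use strong causality, since then the points of a timelike curve converging along a sequence forces the curve to converge.

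\textbf{($\beta$)$\Leftrightarrow$($\gamma$).} This is Thm.\ \ref{cnpf}. Here $K$ is the smallest closed transitive relation containing $J^+$.

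From global hyperbolicity, $K=J^+$ is closed and antisymmetric, so $(M,K)$ is a poset. Given a directed set $D$ bounded above by $u$, the set $D$ lies in $J^-(u)\cap J^+(d_0)$, which is compact. A cofinal net in $D$ then has cluster points, and directedness together with closedness of $J$ makes the cluster point unique. That point is the supremum: it is an upper bound by closedness, and it is least because any upper bound $v$ satisfies $D\subset J^-(v)$, which is closed.

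Conversely, assume ($\gamma$). Antisymmetry of $K$ gives K-causality, hence stable causality. A chronological chain $x_n$ bounded by $x^+$ is directed in $K$, so it has a supremum $s$. I would then show $x_n\to s$, using stable causality and a time function to exclude other accumulation points, which yields ($\beta$).
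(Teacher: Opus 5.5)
Your reduction of all three properties to global hyperbolicity, quoting \cite{minguzzi25b} for ($\alpha$), is exactly how the paper argues. Your proof of global hyperbolicity $\Rightarrow$ ($\gamma$) is correct and shorter than the paper's finite-intersection argument: a cluster point of the cofinal net $\{d\in D: d\ge d_0\}$ in the compact set $J^+(d_0)\cap J^-(u)$ is an upper bound and lies below every upper bound. The problems are in the two hard directions you re-derive yourself.

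\textbf{The strong causality step.} The concatenation does not work as described. To join a returning timelike curve from $x_n$ to $y_n$ with the next one from $x_{n+1}$, you need $y_n\ll x_{n+1}$. Both points are only known to be close to $p$. You could choose $x_{n+1}\in I^+(y_n)$ if $y_n\in I^-(p)$, but a failure of strong causality does not give you that. In the limit you only get some $z\neq p$ with $(p,z),(z,p)\in\bar J$. The natural alternation between neighbourhoods of $p$ and $z$ needs something like $I^-(z)\subset I^-(p)$, and $\bar J$-relatedness does not imply this in non-reflecting spacetimes. Nothing forces the failure of ($\beta$) to be witnessed near $p$.

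You do not need this step, however. Its only role, passing from convergent sampled points to an endpoint, follows from ($\beta$) alone by the paper's interleaving trick. In Step 3 you can even avoid curves entirely. The limit curve $\sigma$ from $p$ only lies in $\overline{J^-(q)}$, so a global timelike push cannot be kept uniformly inside $I^-(q')$. Instead, choose $t_k$ with $\sigma(t_k)$ non-convergent and $\tilde p_1\ll\sigma(t_1)$ close to $\sigma(t_1)$. Then take $\tilde p_{k+1}$ on a timelike curve from $\tilde p_k$ to $\sigma(t_{k+1})$, within $1/(k+1)$ of $\sigma(t_{k+1})$; this is possible since $\tilde p_k\ll\sigma(t_k)\le\sigma(t_{k+1})$. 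The resulting chain is bounded by $q$, because $I^+(\tilde p_k)$ is a neighbourhood of $\sigma(t_k)\in\overline{J^-(q)}$, and it has no limit, contradicting ($\beta$). Make Step 1 precise by alternating near two points $a\le b\le a$ of the closed curve, using $x\ll y\le z\Rightarrow x\ll z$. Causality plus compact causal diamonds then gives global hyperbolicity, a legitimate alternative to the paper's route through Theorem \ref{nnty}(ii).

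\textbf{($\gamma$) $\Rightarrow$ ($\beta$).} Here an idea is genuinely missing. A time function that is a $K$-utility does show that every accumulation point $y$ of $x_n$ equals the supremum $s$. Indeed $y$ is an upper bound by closedness of $K$, so $s\le_K y$, while $t(y)=\lim t(x_n)\le t(s)$. But this says nothing if $x_n$, or a subsequence of it, leaves every compact set, which is exactly how a chain running into a hole behaves. Moreover, a unique accumulation point does not imply convergence.

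Since $x_n\le_K s$ is not realised by any curve, you need a localisation property of $K$. The paper uses \cite[Lemma 3]{minguzzi09c}: for $x_n\notin\bar O$, with $O$ a convex relatively compact neighbourhood of $s$, there are points $z_n\in\partial O$ with $x_n\le_K z_n\le s$. Let $z=\lim z_n$. Every $w\gg z$ is then an upper bound, since $z_n\ll w$ for large $n$, so $s\le_K w$. Letting $w\to z$ gives $s\le_K z\le_K s$ with $z\neq s$, contradicting antisymmetry. Without this ingredient, and the observation that points $w\gg z$ are upper bounds, the direction starting from ($\gamma$) is not proved.
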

Here $K$ is the smallest closed and transitive relation containing the causal relation.

Of course they are also equivalent to global hyperbolicity, and in ($\beta$) `future' can be replaced with `past' or dropped. Moreover, in ($\beta$) {\em chrono} can be replaced with {\em causal}, but since   ($\alpha$) is expressed in terms of the chronological relation, the chronological version is more natural.

Actually, as shown in  \cite{burtscher22,minguzzi25b},  the same result holds  replacing  future chronocompleteness with the Cauchy completeness with respect to the null distance. Unfortunately, the latter  requires the additional ingredient of a time function for its definition  which is a bit unnatural in spacetimes that are not cosmological in the sense of \cite{andersson98}.

%Another form of the equivalence, making use of null distances was given before in \cite{burtscher22}, but that result, based on the notion of null distance, required the additional ingredient of a time function (a bit unnatural in spacetimes that are not cosmological in the sense of \cite{andersson98}).

% well-known equivalence
%between completeness and properness for a Riemannian manifold as a consequence of the
%Hopf–Rinow theorem.
%In this work $(M,g)$ is a smooth Lorentzian spacetime.

Our equivalence proof passes through notions related to the causal boundary construction of Geroch, Kronheimer and Penrose \cite{geroch72}. In fact, we fairly generalize the equivalence between global hyperbolicity and the absence of timelike boundary points (i.e.\ global hyperbolicity is equivalent to the future (or past) causal boundary being achronal) by not demanding causality condition in the reverse implications. It can be noted that GHP impose the distinction property already to establish the equivalence between TIPs and past sets of the form $I^-(\gamma)$ so that the causal boundary theory relies on additional causality conditions which vary depending on the approach \cite{szabados88,marolf03,harris00,flores06b,flores11}. Our result suggests that perhaps there is further room for generalizations and improvements of these causal boundary studies.

\begin{remark}
All the theorems of this work remain valid replacing the Lorentzian spacetime $(M,g)$ with a Finsler spacetime as the relevant causality theory passes through \cite{minguzzi15,minguzzi17}. We did not investigate whether they could hold for some versions of Lorentzian metric spaces.
\end{remark}

Before we continue  let us recall our notations and terminologies.
In this work a manifold is always Hausdorff and second countable, hence paracompact. A spacetime $(M,g)$ is a connected time-oriented Lorentzian manifold whose metric $g$ is $C^2$ ($C^{1,1}$ will be enough) and of signature $(-,+,\ldots, +)$.
The inclusion $\subset$ is reflexive, $X\subset X$. With a curve $\gamma$ we might mean a map  $\gamma\colon I \to M$ or the image of the map.
We write $p<q$ if there is a causal curve connecting $p$ to $q$, and $p\ll q$ if there is a timelike curve connecting $p$ to $q$. We write $p\le q$ if $p<q$ or $p=q$. The sets $J=\{(p,q)\colon p\le q\}$ and $I=\{(p,q)\colon p\ll q\}$ are the causal and chronological relations, respectively.
A {\em causal diamond} is a set of the form $J^+(p)\cap J^-(q)$ also denoted $J(p,q)$ (and similarly for {\em chronological diamonds}).
For most results of causality theory we refer to the  review \cite{minguzzi18b}.

\section{Absence of timelike boundary points and equivalence}
%
%\begin{lemma}
%If $p\in M$ does not belong to the chronology violating set, then $J^+(\p I^+(p))\cap D^+(\p I^+(p))=J^+(p)\cap D^+(\p I^+(p))$
%\end{lemma}
%
%\begin{proof}
%Since $p\in \p I^+(p)$  we need only to prove the inclusion $\subset$. Let $q\in J^+(\p I^+(p))\cap D^+(\p I^+(p))$ then there is a causal curve $\sigma$ connecting some point $r\in \p I^+(p)$ to $q$. If $r\in E^+(p)$ then $q\in J^+(p)$ and we have finished. If $r\in  \p I^+(p)\backslash E^+(p)$, by the limit curve theorem, there is a past inextendible timelike curve $\gamma$ contained in $\p I^+(p)$ with future endpoint $r$. The idea now is to push $\gamma$ slightly to the
%
%
%\end{proof}

We shall need the following result  first proved by Penrose under strong causality for $(M,g)$  \cite{penrose79} (see also the related \cite[Thm.\ 6.2]{budic74} under causal continuity). Remarkably, this causality condition can be dropped leaving  a neat alternative characterization of global hyperbolicity.

\begin{theorem} \label{nnty}
For a spacetime $(M,g)$ the following properties are equivalent:
\begin{itemize}
\item[(i)] Global hyperbolicity.
\item[(ii)] There is no pair $(\gamma, p)$, where $\gamma$ is a future-inextendible timelike curve and $p\in M$,  such that $I^-(\gamma)\subset I^-(p)$.
\item[(iii)] There is no pair $(p,\gamma)$, where $\gamma$ is a past-inextendible timelike curve and $p\in M$,  such that $I^+(\gamma)\subset I^+(p)$.
\end{itemize}
\end{theorem}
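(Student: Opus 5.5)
The implication (i)$\Rightarrow$(ii) I would handle directly. Assume global hyperbolicity, and suppose $\gamma$ is future-inextendible timelike with $I^-(\gamma)\subset I^-(p)$. Every point $y\in\gamma$ is a limit of points of $I^-(y)\subset I^-(\gamma)\subset I^-(p)$. Hence $\gamma\subset \overline{I^-(p)}=J^-(p)$, using that $J$ is closed under global hyperbolicity. Fixing $x\in\gamma$, the future portion of $\gamma$ from $x$ then lies in the compact set $J^+(x)\cap J^-(p)$. This contradicts the non-imprisoning property of globally hyperbolic spacetimes. The implication (i)$\Rightarrow$(iii) is the time dual.

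For (ii)$\Rightarrow$(i) I argue by contraposition, and I record one elementary fact used repeatedly. For a curve $\gamma$ one has $I^-(\gamma)\subset I^-(p)$ as soon as $\gamma\subset\overline{I^-(p)}$, because $I^-(y)$ is open for each $y\in\gamma$. So it suffices to produce a future-inextendible timelike curve contained in $\overline{I^-(p)}$.

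\emph{Chronology.} If there were a closed timelike curve through $q$, the curve obtained by going round it infinitely many times is future-inextendible and timelike. It is contained in $I^-(q)$, which contradicts (ii).

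\emph{Main argument.} Suppose $(M,g)$ is not globally hyperbolic. Using a characterization from \cite{minguzzi18b} (for instance, global hyperbolicity iff causality holds and every causal diamond is compact, or the version with non-total imprisonment), together with the limit curve theorem, I extract a future-inextendible causal curve $\sigma$. This $\sigma$ either starts from some $x$ and is imprisoned in some $J^-(q)$, or comes from a failure of causality or strong causality, in which case it is again a limit of causal curves accumulating near a point. Enlarging $q$ to some $q'\gg q$ gives $\sigma\subset I^-(q')$, and in the causality-violating case a suitable point of the closed causal curve serves as $q'$. The remaining task is to replace the causal curve $\sigma$ by a timelike one.

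For that, I observe that $P=I^-(\sigma)$ is an indecomposable past set. By the Geroch–Kronheimer–Penrose chain construction, which needs no distinction assumption for this direction, $P=I^-(\gamma)$ for some timelike curve $\gamma$, built by choosing $y_n\in I^-(\sigma(t_n))$ with $y_n\ll y_{n+1}$ and cofinal in $P$. If $\gamma$ is future-inextendible, then $I^-(\gamma)=P\subset I^-(q')$ and (ii) fails.

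\emph{Main obstacle.} The hard case is when the chain converges, so that $P=I^-(r)$ for a point $r$, even though $\sigma$ is inextendible. Then $\sigma\subset\overline{I^-(r)}$, and $r$ is a ``future endpoint in the chronological sense'' of a causal curve that does not converge. Here I would use the freedom in choosing $\sigma$. Since $\sigma$ is imprisoned or accumulates, I pick an accumulation point $z$ of $\sigma$. Repeating the construction with a sequence of points along $\sigma$ approaching $z$ infinitely often, the idea is to build a timelike chain which zigzags. Each time $\sigma$ returns near $z$, the chain is pushed slightly into the future of $z$, which I expect to make it non-convergent. All of its points lie in $I^-(r)$, so the fact above applies. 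I expect this to give a future-inextendible timelike curve inside $\overline{I^-(r)}$, contradicting (ii). The step I expect to be hardest is making the non-convergence rigorous without any causality assumption. If it fails, I would fall back on applying the time-dual statement (iii) at $r$, using $\sigma\subset\overline{I^-(r)}$, to derive the contradiction from the other side.
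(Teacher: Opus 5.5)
Your (i)$\Rightarrow$(ii) and the chronology step are fine. The reduction of non-global-hyperbolicity to a future-inextendible causal curve $\sigma$ with $\sigma\subset\overline{J^-(q)}=\overline{I^-(q)}$ is also sound if you use ``causal $+$ compact causal diamonds'': either a closed causal curve wound infinitely often, or the limit curve from $x$ given by a non-compact $J^+(x)\cap J^-(q)$.

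Drop the ``enlarging $q$ to $q'\gg q$'' step. The inclusion $\overline{J^-(q)}\subset I^-(q')$ is a reflectivity property and can fail, e.g.\ in Minkowski space with a spacelike half-line removed. You don't need it: your elementary fact already gives $I^-(\sigma)\subset I^-(q)$.

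The genuine gap is the decisive step, producing a future-inextendible \emph{timelike} curve, and you leave it open. Your plan for the ``hard case'' would not work as stated:
\begin{itemize}
\item An accumulation point $z$ of $\sigma$ need not exist, since $\sigma$ may leave every compact set.
\item If $z$ does exist, a chain that follows points of $\sigma$ approaching $z$ simply converges to $z$. Returning infinitely often near a single point never rules out convergence, and ``pushing into the future of $z$'' supplies no mechanism that does.
\item The fallback is circular. When proving (ii)$\Rightarrow$(i) you cannot assume (iii), and a failure of (iii) does not contradict (ii).
\end{itemize}
Moreover, the ``hard case'' $I^-(\sigma)=I^-(r)$ is not a real obstruction. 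In non-distinguishing spacetimes a PIP can also be the past of a future-inextendible timelike curve; it is an artifact of choosing the GKP chain arbitrarily.

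The missing idea is that the chain must \emph{shadow a non-convergent sequence} on $\sigma$.
\begin{itemize}
\item Since $\sigma\colon[0,b)\to M$ is future-inextendible, there are $t_n\nearrow b$ with $\sigma(t_n)$ not convergent; otherwise interleaving sequences would produce $\lim_{t\to b}\sigma(t)$.
\item Given $y_n\ll\sigma(t_n)\le\sigma(t_{n+1})$, the open set $I^+(y_n)$ contains $\sigma(t_{n+1})$, which lies in $\overline{I^-(\sigma(t_{n+1}))}$.
\item So you can pick $y_{n+1}\in I^+(y_n)\cap I^-(\sigma(t_{n+1}))$ within distance $1/(n+1)$ of $\sigma(t_{n+1})$ for a complete Riemannian metric $h$.
\item Then $(y_n)$ does not converge. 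The timelike curve through the $y_n$ is therefore future-inextendible and contained in $I^-(\sigma)\subset I^-(q)$, contradicting (ii).
\end{itemize}

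The paper does exactly this where it is needed, splitting global hyperbolicity into non-total imprisonment plus relative compactness of chronological diamonds. For the first part, \cite{minguzzi07f} supplies a totally imprisoned lightlike geodesic $\eta$ with $\Omega_p(\eta)=\Omega_f(\eta)=\eta$. For this $\eta$ one has $I^-(\eta)=I^-(\eta(0))$, which is precisely your hard case. The paper zigzags between neighbourhoods $C_p,C_q$ of two distinct points of $\eta$ with disjoint closures, which forces inextendibility. For the diamonds it uses the timelike domain of dependence $\tilde D^-(\partial I^-(q))$. There, failure directly yields a future-inextendible timelike curve inside $I^-(q)$, so no causal-to-timelike conversion is needed.

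With the shadowing lemma added, your route would be a complete and somewhat more elementary alternative. It avoids both the imprisonment structure theorem and domain-of-dependence theory. As written, however, the central step is missing.
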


Note that since $\gamma$ is timelike it belongs to its own timelike past (resp.\ future), thus in the above items we can replace $I^\mp(\gamma)$ with $\gamma$.

\begin{proof}
(i) $\Rightarrow$ (ii). Assume (i) and suppose that (ii) does not hold, then  there is a pair  $(\gamma, p)$, where $\gamma: [0,+\infty)\to M$ is a future-inextendible timelike curve and $p\in M$,  such that $\gamma\subset I^-(p)$. The timelike curve $\gamma$ is totally future imprisoned in the compact set   $K:=J^+(\gamma(0))\cap J^-(p)$, a contradiction with global hyperbolicity (as this property implies non-total imprisonment).

(ii) $\Rightarrow$ (i). Let us assume (ii). We first prove that $(M,g)$ is non-total imprisoning. For if not by the main result of \cite{minguzzi07f} there is an inextendible lightlike geodesic $\eta: \mathbb{R}\to M$ (here reparametrized using the arc-length of a complete Riemannian metric $h$) totally imprisoned in a compact set such that $\Omega_p(\eta)=\Omega_f(\eta)=\eta$, that is, the set of future (resp.\ past) accumulation points of $\eta$ coincides with $\eta$.

Let $p,q$ be distinct points of $\eta$.  Let $C_p$ and $C_q$ be relatively compact neighborhoods of $p$ and $q$, respectively, such that $\bar C_p \cap \bar C_q=\emptyset$, with $C_q$ so small that there is $r\in M$ such that $C_q\subset I^-(r)$.

There are sequences $s_n<t_n< s_{n+1}$, such that $\lim_{n\to \infty} \eta(s_n)=p$, $\lim_{n\to \infty}$ $\eta(t_n)=q$, $\eta(s_n)\in C_p$, $\eta(t_n)\in C_q$. Let $p_i:=\eta(s_i)$, $q_i:=\eta(t_i)$, then $p_i\le q_i \le p_{i+1}$ for every $i$.

We are going to construct a future inextendible timelike curve $\gamma$.
Let $x\in I^-(p_1)$ and let us consider the timelike curve $\gamma_1$ connecting $x$ to $p_1$. It enters $C_p$ so there is some point $\tilde p_1$ in $\gamma_1\cap C_p$, and since $x\ll \tilde p_1\ll p_1\le q_1$  we can find a timelike curve $\sigma_1$ connecting $\tilde p_1$ with $q_1$. It enters $C_q$ so there is some point $\tilde q_1$ in $\sigma_1\cap C_q$, and since $\tilde p_1\ll \tilde q_1\ll q_1\le p_2$  we can find a timelike curve $\gamma_2$ connecting $\tilde q_1$ with $p_2$. We can continue in this way, so constructing a timelike curve passing through points $\tilde p_1 \ll \tilde q_1\ll \tilde p_2\ll \ldots \ll \tilde p_n\ll \tilde q_n\ll \tilde p_{n+1}\ll \ldots$. As it passes successively through $C_p$ and $C_q$ it is future inextendible. Moreover, it is contained in $I^-(r)$ which contradicts (ii). The contradiction proves that $(M,g)$ is non-total imprisoning.

Global hyperbolicity is equivalent to ``non-total imprisonment and relative  compactness of chronological diamonds'' \cite[Cor.\ 3.3]{minguzzi08e}; thus, it remains to prove that the chronological diamonds are relatively compact.

The remaining argument is similar to that in Penrose \cite{penrose79} but strong causality is not needed. We shall have to distinguish between the domain of dependence defined via causal curves, which we denote $D^-(S)$ and that defined via timelike curves, which we denote $\tilde D^-(S)$ (this is denoted $D^-(S)$ by Penrose \cite{penrose72}). That is, we use  the conventions of \cite{hawking73,minguzzi18b}. The relation between them is as follows  $\textrm{Int} D^-(S)=\textrm{Int} \tilde D^-(S)$, $\overline{D^-(S)}= \tilde D^-(S)$  \cite[Prop.\ 3.3]{minguzzi18b}.

%(also note that what Penrose denotes $D^-(S)$ we would denote $\tilde D^-(S)$ as we use causal curves to define the domain of dependence while he used timelike curves, that is, we use  the conventions of \cite{hawking73,minguzzi18b}. Anyway, it is useful to recall $\textrm{Int} D^-(S)=\textrm{Int} \tilde D^-(S)$, $\overline{D^-(S)}= \tilde D^-(S)$  \cite[Prop.\ 3.3]{minguzzi18b}).
%Suppose that there exist points $p,q\in M$ such that the chronological diamond $I^+(p)\cap I^-(q)$ is not relatively compact.
%In particular, it is non-empty.

Let $p,q\in M$. If $p\not\ll q$ then $I^+(p)\cap I^-(q)=\emptyset$ which is relatively compact.

So we can assume $p\ll q$.
The set $S=\p I^-(q)$ is an achronal boundary such that $I^-(S)=I^-(q)$, thus $p\in I^-(S)$. By   \cite[Thm.\ 5.20]{penrose72} \cite[Prop.\ 6.6.6]{hawking73}  \cite[Thm.\ 3.47, Prop.\ 3.33 Eq. (3.10)]{minguzzi18b} if $p \in \textrm{Int} D^-(S)$ the set $J^+(p)\cap J^-(S)$ is compact which implies that $I^+(p)\cap I^-(S)$ is relatively compact. However, the latter is the chronological diamond $I^+(p)\cap I^-(q)$, hence relatively compact.

It remains to consider the case $p\notin \textrm{Int} D^-(S)$. By \cite[Thm.\ 5.5(h)]{penrose72}  \cite[Eq.\ 3.1]{minguzzi18b} $\textrm{Int} D^-(S)=I^-(S)\cap  I^+(D^-(S))$, so $p \notin I^+(D^-(S))=I^+(\overline{D^-(S)})=I^+(\tilde{D}^-(S))$, hence, for every $p'\in I^-(p)$ we have $p'\notin \tilde D^-(S)$, that is, there is future inextendible timelike curve $\gamma$ starting from $p'\in I^-(p)\subset I^-(S)=I^-(q)$ not intersecting $S=\p I^-(q)$, i.e.\ entirely contained in $I^-(q)$. This contradicts (ii), so this case does not apply.

We have shown that in all cases, under (ii),  $I^+(p)\cap I^-(q)$ is relatively compact, which concludes the proof of the equivalence between (i) and (ii). That between (i) and (iii) is analogous.
\end{proof}
%\begin{definition}
%The spacetime $(M,g)$ is said to be {\em future chronocomplete} if every chronologically increasing sequence $x_n$ in $M$, $x_n \ll x_{n+1}$, which is upper bounded, in the sense that  there is
%$x^+\in M$  with $x_n \ll x^+$ for every $n \in \mathbb{N}$ converges to some points of $M$.

%The definition of {\em past chronocomplete} is given analogously in a time dual fashion. A spacetime is {\em chronocomplete} if it is both past and future chronocomplete.
%\end{definition}

\begin{theorem} \label{btsd}
The following conditions on $(M,g)$ are equivalent:
\begin{itemize}
\item[(a)] Global hyperbolicity.
\item[(b)] Future chronocompleteness.
\item[(c)] Past chronocompleteness.
\item[(d)] Future causalcompleteness.
\item[(e)] Past causalcompleteness.
\end{itemize}
%A spacetime if globally hyperbolic iff it is chronocomplete.
\end{theorem}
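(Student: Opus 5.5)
The plan is to route everything through Theorem \ref{nnty}. The forward implications (a)$\Rightarrow$(b)--(e) follow from the one-line argument recalled in the introduction. An increasing sequence $x_n\ll x_{n+1}$ (or $x_n\le x_{n+1}$) bounded by $x^+$ can be joined by a causal curve $\gamma$ through the $x_n$. This curve is contained in the compact set $J^+(x_0)\cap J^-(x^+)$, so by non-total imprisonment it cannot be future inextendible. Hence it has a future endpoint $\bar x$, and $x_n\to\bar x$ because the parameters of the $x_n$ increase to the endpoint parameter. In the causal case the curve may degenerate if $x_n=x_{n+1}$; such repeated points can simply be dropped, and if the sequence is eventually constant it converges trivially. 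The past versions are time-dual.

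For the converse I will first show (b)$\Rightarrow$(a) via (ii) of Theorem \ref{nnty}. Suppose (a) fails. Then there is a future-inextendible timelike curve $\gamma\colon[0,+\infty)\to M$ (parametrized, say, by $h$-arc length for a complete Riemannian metric $h$) and a point $p$ with $\gamma\subset I^-(p)$. Set $x_n:=\gamma(n)$. This sequence is chronologically increasing and bounded above by $p$, so by (b) it converges to some $\bar x$.

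The main obstacle is to contradict the inextendibility of $\gamma$, since convergence of the sampled points $\gamma(n)$ does not by itself give convergence of $\gamma(t)$. Without further causality assumptions $\gamma$ might oscillate between consecutive sample points. The freedom I will exploit is that the sequence can be chosen adversarially along $\gamma$. Since $\gamma$ is future inextendible, $\gamma(t)$ does not converge as $t\to\infty$. I will therefore pick two increasing sequences of parameters $t_n\to\infty$ and $t'_n\to\infty$ such that $\gamma(t_n)$ and $\gamma(t'_n)$ do not have the same limit behaviour. There are two cases:
\begin{itemize}
\item either $\gamma$ has two distinct accumulation points $a\neq b$,
\item or $\gamma$ escapes every compact set along some parameter sequence.
\end{itemize}
In the first case I interleave parameters $s_1<s_1'<s_2<s_2'<\dots$ with $\gamma(s_k)\to a$ and $\gamma(s'_k)\to b$. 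The resulting sequence is still chronologically increasing and bounded by $p$, but it does not converge, contradicting (b). In the second case I choose a sequence escaping to infinity; it is again increasing and bounded by $p$, yet has no limit. Either way (b) is contradicted, so (ii) holds and Theorem \ref{nnty} gives global hyperbolicity.

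Next, (c)$\Rightarrow$(a) follows in the same way using (iii) and past-inextendible curves. For (d)$\Rightarrow$(b), note that a chronologically increasing sequence bounded by $x^+$ is also causally increasing and bounded, since $\ll\subset\le$. Hence future causalcompleteness implies future chronocompleteness, and similarly (e)$\Rightarrow$(c). Combined with the forward implications this closes the cycle of equivalences. The only delicate point is the interleaving/escaping dichotomy above, which relies on Hausdorffness of $M$ (limits are unique, so $a\neq b$ genuinely prevents convergence).
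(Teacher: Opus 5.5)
Your proposal is correct and follows essentially the paper's route: the forward direction by non-total imprisonment of the connecting curve in $J^+(x_0)\cap J^-(x^+)$, and the converse by contradicting Theorem \ref{nnty}(ii) through interleaving two chronologically increasing, $p$-bounded sequences along $\gamma$ that have different limiting behaviour. The only difference is that the paper avoids your two-case dichotomy, whose exhaustiveness in the single-accumulation-point case you leave implicit. It does so by applying (b) to a subsequence lying outside a neighborhood of the first limit $\bar x$, which directly yields a second limit $\check x\neq\bar x$ to interleave with.
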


As mentioned, the direction  `(a) $\Rightarrow$ (b) and (d)' was already proved by Martin and Panangaden \cite{martin06} (see our Appendix).
This is the simpler direction. The reverse direction `(b) or (d) $\Rightarrow$ (a)' uses Thm.\ \ref{nnty}.

\begin{proof}
We give the proof for the future notions, the proof for the past notions being analogous.

(a) $\Rightarrow$ (b) and (d). This direction is discussed in the Introduction.
%
%Assume global hyperbolicity and let $x_n$ be a chronologically increasing sequence chronologically  upper bounded by $x^+$. Alternatively, let  $x_n$ be a causally increasing sequence causally upper  bounded by $x^+$.
%
%The set $K:=J^+(x_1)\cap J^-(x^+)$ is compact and $x_n\in K$ for every $n$. Thus there is a an accumulation point $\hat x\in K$. Suppose that $x_n$ does not converge to $\hat x$, then we can find a neighborhood $O\ni \hat x$ and a subsequence $x_{n_k}$ such that $x_{n_k}\in K\backslash O$. Since $K\backslash O$ is compact $x_{n_k}$ accumulates on a second point $\check x$ distinct from $\hat x$. This means that the original sequence accumulates on both $\hat x$ and $\check x$, but then we can find a subsequence $y_r:=x_{n(r)}$ such $y_{2n+1}\to \hat x$, $y_{2n} \to \check x$, for $n\to +\infty$. Since $y_{2n} \le y_{2n+1}$ and $y_{2(n-1)+1}\le y_{2n}$, and the causal relation is closed, we obtain in the limit $\check x \le \hat x\le \check x$. The violation of causality gives a contradiction which proves that $x_n\to \check x$. This shows that global hyperbolicity implies future chronocompleteness.

(b) or (d) $\Rightarrow$ (a). The implication (d) $\Rightarrow$ (b) is trivial, so it suffices to prove  (b) $\Rightarrow$ (a). We prove global hyperbolicity in the version of Theorem \ref{nnty}(ii).  By contradiction, suppose there is a future inextendible timelike curve $\gamma: [0, b)\to M$ and a point $x^+$ such that $\gamma \subset I^-(x^+)$. Let $x_n=\gamma(s_n)$, $s_n\nearrow b$, then $x_n\ll x_{n+1}$ and $x_n\ll x^+$ for every $n$, thus by (b) the sequence $x_n$ has limit $\bar x$. It follows that $\lim_{t\to b} \gamma(t)=\bar x$ otherwise there would be a subsequence $y_n=\gamma(t_n)$, $t_n<t_{n+1}$, $t_n\to b$, such that $y_n\in M\backslash O$ where $O$ is an open neighborhood of $\bar x$. But then  $y_n\ll y_{n+1}$ and $y_n\ll x^+$ and again by (b) $y_n$ has some limit $\check x\ne \bar x$. However, we can find a third sequence $z_k$ consisting of points alternatively taken from $x_n$ and $y_n$ in such a way that $z_k\ll z_{k+1}\ll x^+$. This has also to converge by (b), but it cannot as it has subsequences converging to different points. The contradiction proves that $\lim_{t\to b} \gamma(t)=\bar x$, but this again is a contradiction with the future inextendibility of $\gamma$. The contradiction proves that the property of  Theorem \ref{nnty}(ii) holds, which coincides with global hyperbolicity.
\end{proof}

\section{Order theoretic versions using $K$ or directed sets}
Since in the definition of chrono/causal completeness the notion of limit is mentioned, the property makes use of the manifold topology of $M$. As such it is not entirely order theoretic. We can remedy this problem using a different causal relation in $(M,g)$ and replacing limits with suprema.

Let us denote with $K$ the smallest closed, reflexive and transitive relation that contains $\le$ (Sorkin and Woolgar's relation) \cite{minguzzi18b}. We write $x\le_K y$ for $(x,y)\in K$.
\begin{theorem}
The following properties are equivalent:
\begin{itemize}
\item[(i)] global hyperbolicity,
\item[(ii)] future causalcompleteness
\item[(iii)] $\le_K$ is antisymmetric and   every  increasing sequence $x_n$ in $M$, $x_n \le_K x_{n+1}$, which is upper bounded, in the sense that  there is
$x^+\in M$  with $x_n \le_K x^+$ for every $n \in \mathbb{N}$, admits a  supremum $\bar x$ (i.e.\ $\bar x$ is an upper bound and any other upper bound $\tilde x$ satisfies $\bar x \le_K \tilde x$).
\end{itemize}
Moreover, the limit point of the sequence $x_n$ due to (ii) coincides with the supremum in (iii) and  the supremum is unique.
\end{theorem}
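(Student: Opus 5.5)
The equivalence (i)$\Leftrightarrow$(ii) is already contained in Theorem \ref{btsd}, so the work is to prove (i)$\Rightarrow$(iii) and (iii)$\Rightarrow$(i), together with the identification of the limit with the supremum.

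\emph{(i)$\Rightarrow$(iii).} Under global hyperbolicity $J$ is closed, so $K=J$. Causality then makes $\le_K$ antisymmetric. Take $x_n\le x_{n+1}$ with $x_n\le x^+$. By Theorem \ref{btsd}(d) the sequence converges to some $\bar x$. It is an upper bound: for fixed $m$ and all $n\ge m$ we have $x_m\le x_n$, and closedness of $J$ gives $x_m\le \bar x$. It is the least one: if $\tilde x$ is any upper bound, then $x_n\le \tilde x$ for all $n$, and passing to the limit gives $\bar x\le \tilde x$. Uniqueness of the supremum follows from antisymmetry. This also proves the ``moreover'' clause in the globally hyperbolic case, which by the equivalence is the only case in which (ii) or (iii) holds.

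\emph{(iii)$\Rightarrow$(i).} I use Theorem \ref{nnty}(ii). Suppose $\gamma:[0,b)\to M$ is future inextendible timelike with $\gamma\subset I^-(x^+)$, and set $x_n=\gamma(s_n)$ with $s_n\nearrow b$. Then $x_n\le_K x_{n+1}$ and $x_n\le_K x^+$, so by (iii) there is a supremum $\bar x$. The aim is to derive a contradiction from this, for instance by showing that $\gamma$ converges to $\bar x$, which contradicts inextendibility.

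The key tool is that upper bounds of the sequence are exactly the points of $\bigcap_n K^+(x_n)$, and that $I^+(\gamma)\subset\bigcap_n K^+(x_n)$. Any $y\in I^+(\gamma)$ is therefore an upper bound, so $\bar x\le_K y$. Moreover $\bar x\in\bigcap_n K^+(x_n)$, a closed set containing $\overline{I^+(\gamma)}$.

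The main obstacle is extracting convergence from $K$-relations, since $K$ is not locally controlled by the manifold topology. I would exploit the freedom in (iii): it applies to every increasing bounded sequence. Choosing different subsequences of $\gamma$ yields the same set of upper bounds, hence the same supremum $\bar x$.

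I would then try a localization argument. Pick $q\ll\bar x$ close to $\bar x$. I want to show that $\gamma$ eventually enters $I^-(\bar x)$ and in fact $I^+(q)$. If $\gamma$ avoided a neighbourhood of $\bar x$ on a subsequence, I would choose a point $y\in I^+(\gamma)$ and build a modified sequence, using $\gamma$'s points together with a point slightly past $q$ along a perturbation. The goal is a new upper bound $\tilde x$ with $\bar x\not\le_K\tilde x$, exploiting the fact that $K$ agrees with $J$ in small convex neighbourhoods when compared against $I$-related points.

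Failing that, I would first derive strong causality-type properties from antisymmetry plus (iii). The idea is that a non-totally-imprisoning failure, as in the proof of Theorem \ref{nnty} via \cite{minguzzi07f}, produces an imprisoned lightlike line $\eta$. One then builds an increasing $K$-sequence (the alternating $\tilde p_i,\tilde q_i$ points) whose upper bounds include both $p$ and $q$ on $\eta$. Since $\eta$ lies in the closure of its own future, $K$-antisymmetry would fail, because $p\le_K q\le_K p$.

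With stable causality-like control in hand, I would conclude: $K=J^+$ closure behaves well, $\bar x$ is the limit of $\gamma$, and this contradicts the future inextendibility of $\gamma$.
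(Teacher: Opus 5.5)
Your (i)$\Rightarrow$(iii) argument and your handling of the final clause are correct and match the paper's. The gap is in (iii)$\Rightarrow$(i).

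You set up the right contradiction: a future inextendible timelike $\gamma\subset I^-(x^+)$ as in Theorem \ref{nnty}(ii), together with the useful observation that all subsequences of $\gamma(s_n)$ have the same upper bounds and hence the same supremum $\bar x$. But you never supply the step that turns this into a contradiction. The ``localization argument'' only states a goal, and it leans on an unproved claim that $K$ agrees with $J$ locally when compared against $I$-related points. The fallback then simply asserts that, with stable-causality-type control, ``$\bar x$ is the limit of $\gamma$''. That assertion is the entire content of the implication. Stable causality is in fact immediate from antisymmetry of $K$, so the detour through the imprisoned lightlike geodesic is unnecessary. More importantly, stable causality gives neither $K=J$ nor $K=\bar J$, and it does not make a $K$-increasing sequence converge to its $K$-supremum. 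No causality condition by itself links the order-theoretic supremum to the manifold topology.

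The missing idea is a boundary-crossing property of $K$.
\begin{itemize}
\item Not every choice of $s_n$ can yield a convergent sequence; otherwise $\gamma$ would have a future endpoint, by the alternating-sequence argument of Theorem \ref{btsd}.
\item So choose $s_n$ and a convex relatively compact open set $O\ni\bar x$ with $x_n\notin\bar O$ for all $n$.
\item Lemma 3 of \cite{minguzzi09c} then provides $z_n\in\partial O$ with $x_n\le_K z_n\le\bar x$. This lemma is nontrivial precisely because $K$ is not defined through curves.
\item Passing to a subsequence, $z_n\to z\in\partial O$. Closedness of $K$ gives $z\le_K\bar x$.
\item Closedness of $K$ also shows $z$ is an upper bound, since $x_m\le_K x_n\le_K z_n$ for $n\ge m$. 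The paper phrases this step via points $w\gg z$. Hence $\bar x\le_K z$.
\item Since $\bar x\in O$ and $z\in\partial O$, we have $z\neq\bar x$, which contradicts antisymmetry of $K$.
\end{itemize}
Note that the contradiction comes from antisymmetry. It does not come from producing an upper bound $\tilde x$ with $\bar x\not\le_K\tilde x$, which is what you were aiming for.
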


The antisymmetry condition in (iii) is necessary, for a totally vicious spacetime would satisfy $\le =M\times M$ so $K=M\times M$ and  every point would be a supremum, but neither (i) nor (ii) would hold. Note that the antisymmetry in (iii) implies that the supremum there mentioned is unique.

I also considered (iii) with $\le_K$ replaced by $\le$ but could only prove the non-total imprisoning property starting from (iii).

\begin{proof}
We already know that (i) and (ii) are equivalent.

(i) and (ii) $\Rightarrow$ (iii). By (i) $\le$ is antisymmetric and coincident with $\le_K$ ($J=K$). Let $x_n$ be a sequence as in (iii), then by (ii) $x_n\to \tilde x$ for some $\tilde x \in M$. Since $J$ is closed and $(x_k,x_n)\to (x_k, \tilde x)$ for $n\to +\infty$, we get $x_k\le \tilde x$, thus $\tilde x$ is an upper bound. If $\check x$ is another upper bound, $x_k\le \check x$, taking the limit $k\to +\infty$ and using again the closure of $J$, $\tilde x\le \check x$. Thus $\tilde x$ is a supremum and we can set $\bar x:=\tilde x$.

(iii) $\Rightarrow$ (i).
Since $K$ is antisymmetric the spacetime is stably causal \cite{minguzzi18b}.

We prove global hyperbolicity in the version of Theorem \ref{nnty}(ii).  By contradiction, suppose there is a future inextendible timelike curve $\gamma: [0, b)\to M$ and a point $x^+$ such that $\gamma \subset I^-(x^+)$. Let $x_n=\gamma(s_n)$, $s_n\nearrow b$, then $x_n\ll x_{n+1}$ and $x_n\ll x^+$ for every $n$, thus by (iii) the sequence $x_n$ has  $K$-supremum $\tilde x$. It is not possible that for every choice of $s_n$, $x_n$ converges to some point for otherwise, arguing as in the proof of Theorem \ref{btsd}, $\gamma(t)$ itself has future limit which contradicts its future inextendibility. This means that we can choose $s_n$ so that $x_n$ does not converge. In particular, we can assume, passing to a subsequence if necessary, that there is a convex relatively compact open neighborhood $O\ni \tilde x$ such that $x_n\notin \bar O$. By \cite[Lemma 3]{minguzzi09c} there is $z_n\in \p O$ such that $x_n\le_K z_n \le \tilde x$ (this result is clear if there is a causal connecting curve between $x_n$ and $\tilde x$, the fact that it holds in the version mentioned is non-trivial as $K$ is not characterized via curves). Let us pass to a further subsequence denoted in the same way so that we can assume $z_n\to z\in \p O$ (note that passing to a subsequence does not change the upper bounds and suprema), so that $z\le_K \tilde x$. Let $w\gg z$, $w\ne \tilde x$, then, for sufficiently large $n$,  $x_n\le_K z_n\ll w$, which means that $w$ is an upper bound and so $\tilde x\le_K w$. Taking the limit $w\to z$ we get $\tilde x\le_K z$ by the closure of $K$. This gives a contradiction with the antisymmetry of $K$ and concludes the proof.
%Let $x_n$ be a sequence as in (ii).
%There cannot be two distinct suprema $y,z$ for a sequence as in (iii), for it would be $y\le_K z\le_K y$ which contradicts antisymmetry of $\le_K$. Thus the supremum in (iii) is unique. Let us denote it $\tilde x$. If $x_n$ converges to $\tilde x$
\end{proof}

The theorem can be restated as follows ({\em poset} is a shorthand for partially ordered set)

\begin{theorem} \label{cngp}
The globally hyperbolic property for $(M,g)$ is equivalent to the following conditions on $(M,K)$:
\begin{itemize}
\item[(i)] it is a poset,
\item[(ii)] every upper bounded increasing sequence has a supremum.
\end{itemize}
\end{theorem}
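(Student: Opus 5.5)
Theorem \ref{cngp} is a reformulation of the preceding theorem, so I would prove it by matching its conditions with item (iii) of that theorem. That theorem already gives the equivalence of global hyperbolicity with (iii), which asks two things. First, $\le_K$ is antisymmetric. Second, every $\le_K$-increasing sequence with a $\le_K$-upper bound has a supremum. The plan is to observe that these two requirements coincide with conditions (i) and (ii) of Theorem \ref{cngp}.

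For condition (i), recall that by definition $K$ is the smallest closed, reflexive and transitive relation containing $\le$. So $K$ is always a preorder, and $(M,K)$ is a poset exactly when $\le_K$ is antisymmetric. This is the first half of (iii).

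For condition (ii), I read ``increasing'' and ``upper bounded'' with respect to the order $K$ of the poset. That is, the sequence satisfies $x_n\le_K x_{n+1}$ and there is $x^+$ with $x_n\le_K x^+$ for all $n$. ``Supremum'' is the least upper bound in $(M,K)$. With this reading, condition (ii) is literally the second half of (iii).

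Hence (i) together with (ii) is equivalent to item (iii) of the previous theorem, and therefore to global hyperbolicity. In the direction from global hyperbolicity, I would also recall the remark from the earlier proof that $J=K$ under global hyperbolicity. So nothing changes if ``increasing'' is read with respect to $\le$ instead.

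The only point needing care is that the preceding proof of (iii) $\Rightarrow$ (i) used sequences increasing along a timelike curve, hence also $\le_K$-increasing. So the weaker-looking hypothesis (ii), stated only for $K$-chains, is exactly what that proof consumed. I do not expect any real obstacle: the argument is a translation of definitions, plus the observation that antisymmetry makes the supremum unique.
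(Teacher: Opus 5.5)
Your proposal is correct and matches the paper, which presents Theorem \ref{cngp} simply as a restatement of item (iii) of the preceding theorem: antisymmetry of $\le_K$ is exactly the poset condition, since $K$ is already reflexive and transitive, and the supremum condition is read with respect to $K$. One wording slip in your closing remark: hypothesis (ii), quantified over all $K$-chains, is more than the proof of (iii) $\Rightarrow$ (i) actually uses, since that proof only applies it to timelike chains with a chronological upper bound, which are a special case of $K$-chains; this does not affect your argument.
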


%\subsection{Directed sets}

An even nicer version would use directed sets rather than increasing sequences.  A subset \(D \subset M\) is \emph{directed} if for every \(d_1, d_2 \in D\), there exists \(d_3 \in D\) such that \(d_1 \leq d_3\) and \(d_2 \leq d_3\).

A {\em closed ordered space} is a topological space endowed with a reflexive, transitive, antisymmetric relation (order or partial order depending on the terminology) which is closed in the product topology, e.g.\ a globally hyperbolic spacetime where the topology is the manifold topology and the relation is the causal relation. We can abstract the notion of global hyperbolicity to the framework of closed ordered spaces by calling them {\em globally hyperbolic closed ordered spaces} if the sets (diamonds) $\{x: a\le x\le b \}$ are compact \cite{minguzzi23}. So a globally hyperbolic spacetime is a globally hyperbolic closed ordered space (this is also true in the Lorentz-Finsler case).

First,  we need an important lemma

\begin{lemma}[Downward Directedness of Upper Bounds]
\label{lem:downward}
%Let $(M,g)$ be a globally hyperbolic spacetime and consider the closed order provided by the causal relation. More generally,
Consider a globally hyperbolic closed ordered space and denote with $J$ or $\le$ its relation.
Let \(D \subset M\) be a non-empty directed set that has at least one upper bound. Let \(U\) be the set of all upper bounds of \(D\). Then \(U\) is downward directed, i.e., for any \(u_1, u_2 \in U\), there exists \(w \in U\) such that \(w \leq u_1\) and \(w \leq u_2\).
\end{lemma}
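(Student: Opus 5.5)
The plan is a compactness argument of finite-intersection type. Fix $u_1,u_2\in U$ and pick any $d_0\in D$, which exists since $D$ is non-empty. For each $d\in D$ set
\[
C_d:=\{x\in M:\ d\le x\le u_1\}\cap\{x\in M:\ x\le u_2\}=J^+(d)\cap J^-(u_1)\cap J^-(u_2).
\]
I will look for a point $w$ in $\bigcap_{d\in D} C_d$. Such a $w$ satisfies $d\le w$ for every $d\in D$, so $w\in U$. It also satisfies $w\le u_1$ and $w\le u_2$, which is exactly what the lemma requires.

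\emph{Closedness and compactness.} Since $J$ is closed in $M\times M$, the slice $J^+(d)=\{x:(d,x)\in J\}$ is closed in $M$. Indeed, it is the preimage of $J$ under the continuous map $x\mapsto (d,x)$; the same argument shows $J^-(u_i)$ is closed. Hence every $C_d$ is closed. Moreover, $C_{d_0}$ is a closed subset of the diamond $\{x: d_0\le x\le u_1\}$, which is compact by the definition of a globally hyperbolic closed ordered space. So $C_{d_0}$ is compact. Each $C_d\cap C_{d_0}$ is then a closed subset of the compact space $C_{d_0}$. No Hausdorff assumption is needed, only that closed subsets of compact sets are compact.

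\emph{Finite intersection property.} Take finitely many $d_1,\dots,d_k\in D$, and include $d_0$ among them. Applying directedness repeatedly, by induction on $k$, gives some $d'\in D$ with $d_i\le d'$ for all $i$. Since $u_1,u_2$ are upper bounds of $D$, we have $d'\le u_1$ and $d'\le u_2$. Therefore $d'\in C_{d_i}$ for every $i$, and the finite intersection is non-empty. Consequently $\{C_d\cap C_{d_0}\}_{d\in D}$ is a family of closed subsets of the compact set $C_{d_0}$ with the finite intersection property. Its total intersection is therefore non-empty, and any point $w$ in it is the required common lower bound of $u_1,u_2$ lying in $U$.

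\emph{Main obstacle.} I expect the only delicate point to be making sure the sets are genuinely closed and sit inside a single compact set. This is why I intersect with $J^-(u_2)$ and use the single compact diamond $J(d_0,u_1)$ as ambient space, rather than working with the diamonds $J(d,u_1)$ individually. I also use closedness of $J$ only through its slices, so no closedness of the order under limits of nets in both variables is required. Transitivity is used only implicitly, in the induction that produces $d'$ from directedness.
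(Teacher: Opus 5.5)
Your proposal is correct and follows essentially the same route as the paper. Both run a finite-intersection compactness argument inside the compact diamond $J^+(d_0)\cap J^-(u_1)$, using the closed sets $J^+(d)\cap J^-(u_1)\cap J^-(u_2)$ and obtaining the finite intersection property from directedness; the only cosmetic difference is that the paper indexes its sets by finite subsets $A\ni d_0$ of $D$ rather than by single elements $d$ intersected with $C_{d_0}$.
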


\begin{proof}
Fix \(d_0 \in D\). For any finite subset \(A \subset D\) containing \(d_0\), define
\[
K_A := \bigcap_{a \in A} J^+(a) \cap J^-(u_1) \cap J^-(u_2).
\]
Since \(u_1\) is an upper bound of \(D\), we have \(d_0 \leq u_1\). Hence \(K_A \subset J^+(d_0) \cap J^-(u_1)\), which is compact by global hyperbolicity. Moreover, each \(J^+(a)\) and \(J^-(u_i)\) is closed because the  relation is closed, so \(K_A\) is a closed subset of a compact set, hence compact (we shall see in a moment that it is non-empty).

We claim the family \(\{K_A\}_{A \text{ finite}, d_0 \in A}\) has the finite intersection property. Indeed, take any finite collection \(A_1, \dots, A_n\). Let \(A = \bigcup_{i=1}^n A_i\), which is finite. Since \(D\) is directed, there exists \(d_A \in D\) such that \(d_A \geq a\) for all \(a \in A\). Because \(u_1\) and \(u_2\) are upper bounds of \(D\), we have \(d_A \leq u_1\) and \(d_A \leq u_2\). Thus \(d_A \in \bigcap_{a \in A} J^+(a) \cap J^-(u_1) \cap J^-(u_2) = K_A\), and consequently \(d_A \in \bigcap_{i=1}^n K_{A_i}\). Therefore, the intersection of any finite subfamily is non-empty.

By compactness, the total intersection is non-empty:
\[
\emptyset \neq \bigcap_{A \text{ finite}, d_0 \in A} K_A.
\]
Take any \(w\) in this intersection. Then for every \(d \in D\) (taking \(A = \{d_0, d\}\)), we have \(d \leq w\), so \(w\) is an upper bound of \(D\). Moreover, \(w \leq u_1\) and \(w \leq u_2\) (taking \(A = \{d_0\}\)). Hence \(w \in U\) with \(w \leq u_1, u_2\), proving the lemma.
\end{proof}

\noindent and a theorem

\begin{theorem}[Bounded Directed Completeness]
\label{thm:main}
%Let $(M,g)$ be a globally hyperbolic spacetime and consider the closed order provided by the causal relation. ore generally,
Consider a globally hyperbolic closed ordered space and denote with $J$ or $\le$ its relation.

Let \(D \subset M\) be a non-empty directed set with an upper bound. Then \(D\) has a supremum.
\end{theorem}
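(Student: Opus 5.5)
We plan to obtain the supremum as the unique minimal element of the set \(U\) of upper bounds of \(D\). Compactness of diamonds will give a nonempty intersection, and Lemma \ref{lem:downward} will make that intersection a single point. Fix \(d_0\in D\) and an upper bound \(u_0\in U\). We first note that \(U\) is closed: it equals \(\bigcap_{d\in D} J^+(d)\), and each \(J^+(d)\) is closed because the relation is closed. Consequently, for every \(u\in U\) the set
\[
C_u := U\cap J^-(u)\cap J^-(u_0)
\]
is a closed subset of the compact diamond \(J^+(d_0)\cap J^-(u_0)\), hence compact. It is also nonempty, since by Lemma \ref{lem:downward} there is \(w\in U\) with \(w\le u\) and \(w\le u_0\).

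Next we show that the family \(\{C_u\}_{u\in U}\) has the finite intersection property. Take \(u_1,\dots,u_n\in U\). Applying Lemma \ref{lem:downward} repeatedly, by induction on \(n\), we get \(w\in U\) with \(w\le u_i\) for all \(i\) and \(w\le u_0\). Then \(w\in\bigcap_i C_{u_i}\). By compactness of \(C_{u_0}\), which contains all the other \(C_u\) after intersecting, the full intersection \(\bigcap_{u\in U}C_u\) contains some point \(\bar x\).

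Finally we verify that \(\bar x\) is the supremum. Since \(\bar x\in U\), it is an upper bound of \(D\). Since \(\bar x\in C_u\) for every \(u\in U\), we have \(\bar x\le u\) for every upper bound \(u\), which is exactly the supremum property. Uniqueness follows from antisymmetry: two suprema \(\bar x,\bar x'\) would satisfy \(\bar x\le \bar x'\le \bar x\), hence \(\bar x=\bar x'\).

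The main obstacle is producing a common lower element for an arbitrary, possibly infinite, family of upper bounds. Downward directedness only gives finite meets, so the passage to an infinite intersection relies on compactness of the diamonds together with closedness of \(U\). Both facts are already secured by the closed-order and global hyperbolicity assumptions, so we expect this step to go through.
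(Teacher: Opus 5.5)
Your proof is correct. It is a mild but genuine rearrangement of the paper's argument.

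The paper fixes $u\in U$, sets $S=U\cap J^-(u)$, and intersects the compact sets $K_{A,B}=\{x: a\le x\le b \text{ for all } a\in A,\ b\in B\}$ over finite $A\subset D$ and finite $B\subset S$. Its finite intersection property is witnessed by elements of $D$ itself: a common $D$-upper bound $d_A$ of $A$ automatically lies below every $b\in B$. So only directedness of $D$ is used there, and Lemma~\ref{lem:downward} is invoked once, at the end, to pass from ``$t\le b$ for all $b\in S$'' to ``$t\le z$ for all $z\in U$''.

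You instead note that $U=\bigcap_{d\in D}J^+(d)$ is closed and index the family by all of $U$. The price is that your witnesses must themselves be upper bounds, which is why you have to iterate the Lemma to get the finite intersection property. In exchange, the final step is immediate and you avoid the double indexing over $A$ and $B$.

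Combining the two ideas removes the Lemma altogether. Take the closed subsets $J^+(d)\cap X$ for $d\in D$ and $J^-(b)\cap X$ for $b\in U$ of the compact diamond $X=J^+(d_0)\cap J^-(u_0)$. They have the finite intersection property with witnesses in $D$: any $e\in D$ above $d_0,d_1,\dots,d_m$ lies in $X$ and below every upper bound. Any point of the total intersection is then already the least upper bound. So in both your version and the paper's, the downward-directedness lemma is a convenience rather than a necessity.
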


\begin{proof}
Let \(U\) be the set of all upper bounds of \(D\). Fix an arbitrary upper bound \(u \in U\). Define
\[
S := U \cap J^-(u) = \{ y \in M \mid y \text{ is an upper bound of } D \text{ and } y \leq u \}.
\]
The set \(S\) is non-empty because \(u \in S\).

For every finite subset \(A \subset D\) (containing a fixed \(d_0 \in D\)) and every finite subset \(B \subset S\), define
\[
K_{A,B} := \{ x \in M \mid a \leq x \leq b \text{ for all } a \in A,\ b \in B \}.
\]
We note that since \(b \leq u\) for every \(b \in B\), the condition \(x \leq b\) already implies \(x \leq u\) by transitivity. Therefore, \(K_{A,B} \subset J^+(d_0) \cap J^-(u)\), which is compact by global hyperbolicity. Since the  relation is closed, each \(K_{A,B}\) is closed, hence compact (we shall prove that they are non-empty in a moment).

We show the family \(\{K_{A,B}\}\) has the finite intersection property. Let \(A_1, \dots, A_m\) and \(B_1, \dots, B_n\) be finite collections. Let \(A = \bigcup A_i\) and \(B = \bigcup B_i\), both finite. Since \(D\) is directed, there exists \(d_A \in D\) such that \(d_A \geq a\) for all \(a \in A\). Since every \(b \in B\) is an upper bound of \(D\), we have \(d_A \leq b\) for all \(b \in B\). Hence
\[
d_A \in \bigcap_{a \in A} J^+(a) \cap \bigcap_{b \in B} J^-(b) = K_{A,B}.
\]
Thus \(d_A \in \bigcap_{i,j} K_{A_i, B_j}\), so the finite intersection property holds.

By compactness, there exists an element
\[
t \in \bigcap_{A \text{ finite}, B \text{ finite}} K_{A,B}.
\]
For any \(d \in D\), choosing \(A = \{d_0, d\}\) and any \(B\) (say empty) gives \(d \leq t\). Hence \(t\) is an upper bound of \(D\), i.e., \(t \in U\).

Now, let \(z \in U\) be arbitrary. By Lemma~\ref{lem:downward}, there exists \(w \in U\) such that \(w \leq u\) and \(w \leq z\). Since \(w \in U\) and \(w \leq u\), we have \(w \in S\). Because \(t \in K_{\{d_0\}, \{w\}}\), we have \(t \leq w\). Combining with \(w \leq z\), transitivity yields \(t \leq z\).

Since \(z \in U\) was arbitrary, \(t\) is a lower bound for the entire set of upper bounds \(U\). As \(t \in U\) itself, it is the least element of \(U\), i.e., the supremum (least upper bound) of \(D\).
\end{proof}

\begin{corollary}[Convergence to the supremum]
\label{cor:convergence}
Under the same hypotheses of Theorem \ref{thm:main}, the directed set \(D\) considered as a net
\((x_d)_{d \in D}\) (with \(x_d = d\)) converges to its supremum
in the topology of the globally hyperbolic closed ordered space.
\end{corollary}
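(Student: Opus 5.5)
The plan is to show directly that every neighbourhood of the supremum $t$ eventually contains the net, by a compactness-and-closedness argument inside a single compact diamond. Fix $d_0\in D$ and an upper bound $u$; by Theorem \ref{thm:main} the supremum $t$ exists and $t\le u$. For $d\ge d_0$ we have $d_0\le d\le t\le u$, so the tail $\{d\in D: d\ge d_0\}$ lies in the compact diamond $C:=J^+(d_0)\cap J^-(t)$. Since convergence of a net only depends on its tails, it suffices to study the net restricted to this tail, which is cofinal in $D$ (directedness).

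Suppose, by contradiction, that $(x_d)$ does not converge to $t$. Then there is an open neighbourhood $O\ni t$ such that the set $E:=\{d\in D: d\ge d_0,\ d\notin O\}$ is cofinal in $D$: for every $d\in D$ there is $e\in E$ with $e\ge d$. Then $E$, with the induced order, is itself a directed set (given $e_1,e_2\in E$, take $d\in D$ above both, then $e\in E$ above $d$), so $(x_e)_{e\in E}$ is a subnet of $(x_d)$ lying in the compact set $C\setminus O$. By compactness it has a cluster point $s\in C\setminus O$, so in particular $s\ne t$ and $s\le t$ (as $C\subset J^-(t)$ and $J^-(t)$ is closed).

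The key step is to show that $s$ is an upper bound of $D$, since then the supremum property gives $t\le s$, and antisymmetry yields $s=t$, contradicting $s\notin O$. To see that $s$ is an upper bound, fix $d\in D$. By cofinality, the elements $e\in E$ with $e\ge d$ form a tail of the directed set $E$, and along this tail $x_e=e\in J^+(d)$. Since $s$ is a cluster point of the net $(x_e)_{e\in E}$, every neighbourhood of $s$ meets $\{x_e: e\ge e_1\}$ for every $e_1$, in particular for $e_1\ge d$; hence $s\in\overline{J^+(d)}=J^+(d)$, using that $J^+(d)$ is closed because the relation is closed. Thus $d\le s$ for all $d\in D$.

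The only delicate point is the nets bookkeeping: ensuring that the "bad" set $E$ is genuinely cofinal (this is exactly the negation of eventual membership in $O$) and that cluster points of a net in a compact set exist without any countability assumption, which is standard (every net in a compact space has a convergent subnet / cluster point). No metrizability or first countability of the closed ordered space is needed, so the argument works in the abstract setting of globally hyperbolic closed ordered spaces, and specialises to globally hyperbolic spacetimes with $J$ the causal relation.
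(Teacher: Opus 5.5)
Your proof is correct and follows essentially the paper's route: restrict to the tail above $d_0$ inside a compact diamond, use closedness of $J^+(d)$ to show that a cluster point is an upper bound, and identify it with the supremum via antisymmetry. The differences are cosmetic. You inline the fact ``a net in a compact space with a unique cluster point converges'' as a contradiction argument with the cofinal set $E$, which incidentally shows Hausdorffness is not needed. And by capping the diamond at $t$ rather than $u$ you get $s\le t$ for free, instead of checking that the cluster point lies below every upper bound.
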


\begin{proof}
Let \(u\) be an upper bound of \(D\) and let \(t = \sup D\) (which exists
by Theorem~\ref{thm:main}). Fix any \(d_0 \in D\) and consider the
cofinal subset (for every $d\in D$ there is some $d'\in D_0$, $d\le d'$)
\[
D_0 = \{ d \in D : d \ge d_0 \}.
\]
Because \(D_0\) is cofinal, $\sup D=\sup D_0$ and the original net \((x_d)_{d \in D}\) converges
to \(t\) if and only if the restricted net \((x_d)_{d \in D_0}\) does.

All elements of \(D_0\) lie between \(d_0\) and \(u\), so
\[
\{ x_d : d \in D_0 \} \subset \; J^+(d_0) \cap  J^-(u)
=: J(d_0,u).
\]
By global hyperbolicity the diamond \(J(d_0,u)\) is compact.
Moreover, the whole space is Hausdorff (the order is closed, hence the
diagonal \(\Delta=J\cap J^T\) is closed in the product
topology). Therefore \(J(d_0,u)\) is a compact Hausdorff subspace.

The net \((x_d)_{d \in D_0}\) lies entirely in this compact subspace,
so it possesses at least one cluster point \(y \in J(d_0,u)\).

Let us prove that  \(y = t\).
\begin{itemize}
\item {\(y\) is an upper bound of \(D\):}
Fix any \(a \in D_0\). For all indices \(d \ge a\) (with \(d \in D_0\))
we have \(x_d = d \ge a\), i.e.\ the net is eventually in the set
\(J^+(a)\). Because the order is closed, \(J^+(a)\) is closed
in the topology, hence every cluster point of the net belongs to
\(J^+(a)\). Thus \(y \in J^+(a)\), so \(a \le y\). Since
\(a\) was arbitrary and \(D_0\) is cofinal in \(D\), \(y\) is an upper
bound of the whole set \(D\).

\item {\(y\) is a lower bound of all upper bounds of \(D\):}
Let \(z\) be any upper bound of \(D\). Then for every \(d \in D_0\),
\(x_d = d \le z\), so the whole net lies in the closed set
\(J^-(z)\). Its cluster point must also lie in
\(J^-(z)\); hence \(y \le z\).
\end{itemize}

The supremum is unique so $y=t$.

%The supremum \(t = \sup D\) is precisely the greatest lower bound of the
%set of all upper bounds of \(D\). From the two properties above we have
%\(y \le t\) (because \(t\) is an upper bound, so \(y \le t\)) and
%\(t \le y\) (because \(y\) is an upper bound and \(t\) is the least one).
%Thus \(y = t\).

We have shown that the net \((x_d)_{d \in D_0}\) in the compact Hausdorff
space \([d_0,u]\) has exactly one cluster point, namely \(t\). In a
compact Hausdorff space a net with a unique cluster point converges to
that point. Hence
\[
\lim_{d \in D_0} x_d = t,
\]
and therefore the original net \((x_d)_{d \in D}\) also converges to
\(t\).
\end{proof}

\begin{theorem} \label{cnpf}
The globally hyperbolic property for $(M,g)$ is equivalent to the following conditions on $(M,K)$:
\begin{itemize}
\item[(i)] it is a poset,
\item[(ii')] every non-empty upper bounded directed set has a supremum.
\end{itemize}
\end{theorem}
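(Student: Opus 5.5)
The plan is to prove the two implications separately, using the previous theorem (the sequential version, Theorem \ref{cngp}) for one direction and Theorem \ref{thm:main} for the other.

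\emph{Global hyperbolicity $\Rightarrow$ (i) and (ii').} Assume $(M,g)$ is globally hyperbolic. Then $J$ is closed and transitive, so $K=J$. Causality gives antisymmetry of $J$, hence (i). Moreover the causal diamonds $J^+(a)\cap J^-(b)$ are compact. Thus $(M,J)=(M,K)$ with the manifold topology is a globally hyperbolic closed ordered space in the sense introduced above. Theorem \ref{thm:main} then applies directly: every non-empty directed set with an upper bound has a supremum with respect to $\le=\le_K$. This is (ii').

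\emph{(i) and (ii') $\Rightarrow$ global hyperbolicity.} This direction reduces to Theorem \ref{cngp}. Let $x_n$ be a $\le_K$-increasing sequence that is $\le_K$-upper bounded. Its image $D=\{x_n : n\in\mathbb{N}\}$ is non-empty and directed with respect to $\le_K$: given $x_i,x_j$, take $x_{\max(i,j)}$ and use reflexivity and transitivity of $K$. Any upper bound of the sequence is an upper bound of $D$, and conversely. So (ii') yields a supremum of $D$, which is exactly a supremum of the sequence in the sense of Theorem \ref{cngp}(ii). Together with (i), Theorem \ref{cngp} gives global hyperbolicity.

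There is no real obstacle here, since the hard work is contained in Theorem \ref{nnty} (through Theorem \ref{cngp}) and in the compactness argument of Theorem \ref{thm:main}. The only points requiring care are bookkeeping:
\begin{itemize}
\item the identification $K=J$ under global hyperbolicity, which uses closedness of $J$ there, so that the abstract closed-ordered-space results apply to $K$;
\item the fact that ``directed'' in (ii') is understood with respect to $\le_K$, so that the image of an increasing sequence qualifies as a directed set.
\end{itemize}
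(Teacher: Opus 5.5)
Your proposal is correct and follows the paper's proof essentially step for step. Under global hyperbolicity, $K=J$ is a closed antisymmetric order with compact diamonds, so Theorem~\ref{thm:main} gives (ii'); conversely, an increasing sequence is a directed set, so (ii') yields Theorem~\ref{cngp}(ii), which together with (i) gives global hyperbolicity (the paper merely cites $K$-causality for (i) where you use causality plus $K=J$).
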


In poset theory (ii') is called {\em bounded directed completeness} while (ii) is called {\em  bounded
$\omega$-chain completeness}. The former is stronger than the latter which, however, is sufficient in analysis, measure theory and probability.

\begin{proof}
Since every increasing sequence is directed, it is clear that (ii') implies Thm.\ \ref{cngp}(ii) and so global hyperbolicity.

For the converse, let us assume global hyperbolicity. It implies $K$-causality (equiv.\ stable causality) and so (i). Moreover, $\le_K=\le$ thus (ii') is precisely the assumption of Theorem \ref{thm:main}, which implies that the directed set has a supremum.
%Let $D$ be a directed set which is bounded by $x$. For each $y\in D$, consider the compact set $C_y:=J^+(y)\cap J^-(x)$ (which is non-empty as it contains $x$). The non-empty closed set $C=\cap_{y\in D} C_y$ (it contains $x$) is compact as it is contained in $C_z$, for any $x\in D$. Any point of $C$ is clearly an upper bound.
\end{proof}

Let $D_p=\{(x,y): I^-(x)\subset I^-(y)\}$,  $D_f=\{(x,y): I^+(x)\supset I^+(y)\}$ and $D=D_p\cap D_f$. They are all reflexive and transitive relations. The spacetime is said to be {\em weakly distinguishing} if $D$ is antisymmetric.  It is {\em reflecting} if $D_f=D_p$ (equiv.\ $D=\bar J$, see \cite[Def.\ 4.9]{minguzzi18b}).
\begin{theorem} \label{cntf}
Global hyperbolicity for $(M,g)$ is equivalent to the following three conditions, the last two of which are about $(M,D)$
\begin{itemize}
\item[(a)] $(M,g)$ is reflecting,
\item[(b)] $(M,D)$ is a poset,
\item[(c)] on $(M,D)$ every non-empty upper bounded directed set has a supremum.
\end{itemize}
\end{theorem}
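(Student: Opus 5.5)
The plan has two directions. For the forward direction I use known facts about globally hyperbolic spacetimes. For the converse I reduce to Theorem \ref{nnty}(ii), following the pattern of the proof of the $K$-version.

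\emph{Global hyperbolicity $\Rightarrow$ (a),(b),(c).} A globally hyperbolic spacetime is causally continuous, hence reflecting. It is also causally simple, so $J$ is closed and $\bar J=J$. Under reflectivity we have $D=\bar J$ \cite[Def.\ 4.9]{minguzzi18b}, so $D=J=K$. Since $J$ is antisymmetric, (b) follows. Moreover $(M,D)=(M,\le)$ is a globally hyperbolic closed ordered space, so Theorem \ref{thm:main} gives (c).

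\emph{(a),(b),(c) $\Rightarrow$ global hyperbolicity.} Suppose, for contradiction, that Theorem \ref{nnty}(ii) fails. Then there are a future inextendible timelike curve $\gamma\colon[0,b)\to M$ and a point $x^+$ with $\gamma\subset I^-(x^+)$.

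The image of $\gamma$ is totally ordered by $\ll\subset D$, hence it is a $D$-directed set. It is $D$-upper bounded by $x^+$, since $y\ll x^+$ implies $I^-(y)\subset I^-(x^+)$ and $I^+(x^+)\subset I^+(y)$. By (c) it has a $D$-supremum $\bar x$.

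Next I identify $I^-(\bar x)$. As $\bar x$ is an upper bound, $I^-(\gamma)\subset I^-(\bar x)$. For the reverse inclusion, take any $w$ with $I^-(\gamma)\subset I^-(w)$. By reflectivity $D_p=D_f=D$, so $w$ is automatically a $D$-upper bound, and thus $\bar x\,D\,w$, i.e.\ $I^-(\bar x)\subset I^-(w)$. Hence $I^-(\bar x)$ is the smallest set of the form $I^-(w)$ containing the TIP $P:=I^-(\gamma)$. This set is nonempty, since $x^+$ is one such $w$.

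The key step, and the main obstacle, is to derive a contradiction from this minimality, i.e.\ to show $P=I^-(\bar x)$ and then conclude. I would try the following. Suppose $y\in I^-(\bar x)\setminus P$. Choose a point $w$ slightly to the past of $\bar x$, with $y\notin I^-(w)$ but still $P\subset I^-(w)$; this would contradict minimality. To construct $w$ I would imitate the $K$-proof. Take a convex neighbourhood $O$ of $\bar x$. Since $\gamma$ is future inextendible and the points $\gamma(s)$ are $D$-below $\bar x$, the curve $\gamma$ leaves $\bar O$ infinitely often, or else it eventually stays in $\bar O$ and, by convexity and minimality, converges to $\bar x$. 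Using timelike curves from $\gamma(s)$ into $I^-(\bar x)$, one obtains points $z_n\in\partial O$ with $\gamma(s_n)\ll z_n$ approximately and $z_n\to z\in\partial O$, $z\le\bar x$ in the closure. Then every $w\gg z$ is an upper bound, so $\bar x\,D\,w$. Letting $w\to z$ and using that $D$ is closed under reflectivity gives $\bar x\,D\,z$, and also $z\,D\,\bar x$. This contradicts antisymmetry (b), since $z\ne\bar x$.

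If $\gamma$ instead eventually stays near $\bar x$, then $\gamma(t)\to\bar x$, which contradicts future inextendibility. Either way, Theorem \ref{nnty}(ii) holds, and global hyperbolicity follows. I expect the careful justification that $D$ is closed in the reflecting case, and the choice of the points $z_n$ with $\gamma(s_n)\,D\,z_n\,D\,\bar x$, to be the delicate part; there reflectivity replaces the role played by \cite[Lemma 3]{minguzzi09c} for $K$.
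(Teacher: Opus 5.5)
Your proposal is correct, but it takes a longer route than the paper.

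\textbf{The paper's route.} The converse is a two-line reduction. Under (a) one has $D=\bar J$, so $D$ is closed, reflexive and transitive and contains $J$. Any closed relation containing $J$ contains $\bar J=D$. Hence $D=K$, and (b), (c) become exactly the hypotheses of Theorem \ref{cnpf}. You quoted $D=\bar J$ in your forward direction (which matches the paper) but did not use it this way.

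\textbf{Your route.} You re-run the (iii)$\Rightarrow$(i) argument of the $K$-theorem with $D$ in place of $K$. This works, and the steps you flag as delicate are in fact immediate, because $D$ is defined through chronological pasts and futures.
\begin{itemize}
\item Since $\bar x$ is a $D$-upper bound of a timelike curve, $\gamma(s)\in I^-(\gamma(s'))\subset I^-(\bar x)$ for $s<s'$. So $\gamma(s_n)\ll\bar x$ holds genuinely, not only ``approximately''.
\item Any point $z_n$ where a timelike curve from $\gamma(s_n)\notin\bar O$ to $\bar x\in O$ crosses $\partial O$ satisfies $\gamma(s_n)\ll z_n\ll\bar x$.
\item The limit $z=\lim z_n$ lies in $\overline{I^-(\bar x)}$. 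This already means $(z,\bar x)\in D_p$: if $v\ll z$, the open set $I^+(v)$ meets $I^-(\bar x)$.
\item For every $w\gg z$ and every $s$, you have $\gamma(s)\ll\gamma(s_n)\ll z_n\ll w$ for large $n$. So $w$ is an upper bound and $I^+(w)\subset I^+(\bar x)$.
\item Therefore $I^+(z)=\bigcup_{w\gg z}I^+(w)\subset I^+(\bar x)$, i.e.\ $(\bar x,z)\in D_f$.
\item Reflectivity $D_p=D_f=D$ gives $z\,D\,\bar x\,D\,z$ with $z\in\partial O$ and $\bar x\in O$, contradicting (b).
\end{itemize}
Neither the closedness of $D$ nor any analogue of the lemma from \cite{minguzzi09c} is needed.

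\textbf{Clean-up.} Drop the detour through the minimality of $I^-(\bar x)$ and the attempt to show $P=I^-(\bar x)$; it is not used. Also replace ``by convexity and minimality converges to $\bar x$'' with the simple observation that $\gamma(t)\not\to\bar x$ by future inextendibility. Hence some relatively compact neighbourhood $O$ of $\bar x$ is left at parameter values $s_n\to b$.

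\textbf{What each route buys.} The paper's route shows structurally that in reflecting spacetimes $D$ coincides with the Sorkin--Woolgar relation $K$. The theorem then becomes a corollary of Theorem \ref{cnpf}, and through it of the $K$-argument, which needs the non-trivial lemma of \cite{minguzzi09c}. Your route is self-contained and more elementary: it uses only Theorem \ref{nnty}(ii) and the definition of $D$.
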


An alternative nice form, which involves only $D_p$, is as follows. Replace $D$ with $D_p$ in (b) and (c) and write (a) as: $D_p$ coincides with the same relation for the opposite time orientation. Since that relation is $D_f$, condition (a) is  $D_p=D_f$, namely reflectivity, and so $D=D_p$, showing the equivalence of the two formulations.

\begin{proof}
Suppose $(M,g)$ is globally hyperbolic, then $D=J=K$, causality holds and so (b) holds, reflectivity holds and (c) follows from the validity of  (ii') of Thm.\ \ref{cnpf} using $K=D$.

Assume (a),(b) and (c).
By reflectivity $D=\bar J$ thus $D$ is  closed. Clearly $D$ is the smallest closed reflexive and transitive relation containing $J$, thus $D=K$. As $(M,D)$ is a poset, $(M,K)$ is a poset. Condition (c) using $D=K$ reads as (ii') of Thm.\ \ref{cnpf}, thus from that theorem $(M,g)$ is globally hyperbolic.
\end{proof}

\section{Conclusions}
We have shown that the causal completeness notions recently used in the
low-regularity and optimal transport approaches to Lorentzian geometry ---
future/past chronocompleteness and future/past causalcompleteness --- are
all equivalent  to global hyperbolicity, with no auxiliary
hypothesis (Thm.\ \ref{btsd}). This confirms the expectation among experts in the field that these Dedekind-type completeness properties are not new
causality conditions but simply global hyperbolicity itself, seen from an
order-theoretic vantage point.

The key tool is Thm.\ \ref{nnty}, a characterization of global hyperbolicity
as the absence of a future- (or past-) inextendible timelike curve
chronologically trapped in the past (future) of a single point --- a condition formally equivalent to
the absence of timelike points in the causal boundary of Geroch, Kronheimer
and Penrose.
%%Unlike earlier versions of this characterization, due to Penrose
%\cite{penrose79} and, under causal continuity, to Budic and Sachs
%\cite{budic74}, no a priori causality condition needs to be imposed.

Combined with the Heine-Borel-type characterization of \cite{minguzzi25b},
our results place global hyperbolicity in a genuine Hopf--Rinow-type
correspondence (Thm.\ \ref{nzaw}): properness of the causal topology on one
side, Dedekind completeness of the causal order on the other,
paralleling properness versus metric completeness for Riemannian manifolds.

Moreover, the equivalence can be stated in purely order‑theoretic language, dispensing with the manifold topology altogether. Replacing the causal relation with the Sorkin–Woolgar relation $K$, global hyperbolicity is equivalent to $(M,K)$  being a poset in which every upper-bounded increasing sequence has a supremum (Thm. \ref{cngp}), and, more fundamentally, to the condition that every non-empty upper-bounded directed set has a supremum (Thm. \ref{cnpf}). The latter is precisely the bounded directed completeness of poset theory, a property that underlies much of analysis and measure theory.
%This reformulation shows that the Dedekind-type completeness is intrinsic to the causal order itself, independent of any auxiliary topological or metric structure, and places the Hopf-Rinow-type correspondence  in the realm of ordered spaces.

\section*{Appendix: original Domain Theory approach}

In this section we show how to read the implication ``global hyperbolicity $\Rightarrow$ future chronocompleteness'' in the Domain Theory approach by Martin and Panangaden \cite{martin06}. This should not be immediately obvious to readers not familiar with their work. For more on this theory see \cite{ebrahimi14,ebrahimi15,finster21,sharifzadeh19,mazibuko23}.

The statement is essentially  \cite[Thm.\ 6.1(iv)]{martin06}
\begin{proposition} \label{cnyt}
On a globally hyperbolic poset each  non-empty directed set with an upper bound has a supremum.
\end{proposition}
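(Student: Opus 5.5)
The plan is to translate Martin and Panangaden's notion of a globally hyperbolic poset into the language of closed ordered spaces used above, and then reduce to Theorem~\ref{thm:main}. In their framework a globally hyperbolic poset is a poset $(X,\le)$ that is bicontinuous, meaning that the way-below relation $\ll$ makes every $\mathord{\downarrow}\!\!\downarrow x$ and $\mathord{\uparrow}\!\!\uparrow x$ directed and approximating. Its topology is the interval topology generated by the sets $(a,b)=\{x: a\ll x\ll b\}$, and its closed intervals $[a,b]=\{x: a\le x\le b\}$ are compact in that topology.

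The first step is to check that with the interval topology the order is closed. Suppose $x\not\le y$. By bicontinuity (interpolation and approximation) there is $a\ll x$ with $a\not\le y$. Dually there is $b\gg y$ with $a\not\le b$. Then $\mathord{\uparrow}\!\!\uparrow a\times \mathord{\downarrow}\!\!\downarrow b$ is an open neighborhood of $(x,y)$ disjoint from $\le$. This uses that the sets $\mathord{\uparrow}\!\!\uparrow a$ and $\mathord{\downarrow}\!\!\downarrow b$ are open in the interval topology, which is standard in their paper. Together with compactness of the diamonds, $(X,\le)$ with the interval topology is a globally hyperbolic closed ordered space in the sense used before Lemma~\ref{lem:downward}.

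The second step is to invoke Theorem~\ref{thm:main} directly. Every non-empty directed set with an upper bound then has a supremum, and by Corollary~\ref{cor:convergence} the directed set, viewed as a net, converges to it. Alternatively, one can rerun the finite-intersection argument of Theorem~\ref{thm:main} verbatim on the compact diamond $[d_0,u]$. That argument needs only the closedness of the sets $\mathord{\uparrow}x$ and $\mathord{\downarrow}x$, which follows from the first step, and Lemma~\ref{lem:downward} for downward directedness of the upper bounds.

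I expect the main obstacle to be the first step, namely matching definitions and verifying that the interval topology makes $\le$ closed, in particular the approximation claim that $x\not\le y$ yields some $a\ll x$ with $a\not\le y$. I would try to deduce this from $x=\sup \mathord{\downarrow}\!\!\downarrow x$: if every $a\ll x$ satisfied $a\le y$, then $y$ would be an upper bound of $\mathord{\downarrow}\!\!\downarrow x$, hence $x\le y$, a contradiction. The dual statement for $b\gg y$ follows in the same way using $y=\inf \mathord{\uparrow}\!\!\uparrow y$ together with interpolation.

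Finally, to connect with spacetimes, I would recall Martin and Panangaden's result that a globally hyperbolic spacetime with the causal order is a globally hyperbolic poset whose interval topology is the manifold topology. This recovers the chronocompleteness implication stated in the Introduction.
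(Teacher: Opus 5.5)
Your proposal is correct and follows the route the paper indicates: the paper only cites Martin--Panangaden for this statement and remarks in a footnote that Theorem~\ref{thm:main} is more general, which is exactly the reduction you carry out. Your Step~1 derives closedness of $\le$ in the interval topology from $x=\sup \mathord{\downarrow}\!\!\downarrow x$ and $y=\inf \mathord{\uparrow}\!\!\uparrow y$ (interpolation is not actually needed), and so supplies the closed-order hypothesis of Theorem~\ref{thm:main} that the footnote leaves implicit.
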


We need to clarify the terms entering this results.
A poset  is a set endowed with a reflexive, transitive antisymmetric relation.
They define a {\em globally hyperbolic poset} out of two abstract relations \cite[Sec.\ 6]{martin06}

\begin{definition}
A {\em globally hyperbolic poset} is a poset $(X,\le)$ such that
\begin{itemize}
\item[(a)] $(X,\le)$ is bicontinuous, and
\item[(b)] intervals $[a,b]$  are compact in the interval topology — a topology derived purely from the order \cite[Def. 2.12]{martin06}, with no reference to any manifold or pre-existing topology.
\end{itemize}
\end{definition}
They prove \cite[Sec.\ 6]{martin06} that globally hyperbolic spacetimes $(M,g)$ provide examples of globally hyperbolic posets where $\le$ is identified with the causal relation.\footnote{Our Theorem \ref{thm:main} is more general than Prop.\ \ref{cnyt} in that our topology is arbitrary and only such that the causal diamonds are compact, moreover the poset is not required to be bicontinuous. Both results applied to globally hyperbolic spacetimes $(M,g)$ are sufficient to deduce that every non-empty directed set with an upper bound has a supremum.} In this case the interval topology coincides with the Alexandrov topology and $\ll$ --- a relation deduced entirely in an order theoretic way from $(X,\le)$ (see \cite[Def.\ 2.5]{martin06})  (the condition of being bicontinuous provides a compatibility between alternative definitions of $\ll$) --- coincides with the chronological relation $I$.

Now, consider a globally hyperbolic spacetime $(M,g)$. If we have a chronologically increasing sequence $x_n$,  $x_n\ll x_{n+1}$, which is chronologically upper bounded $x_n\ll x^+$, then the same is true causally $x_n\le x_{n+1}$, $x_n\le x^+$, and so there is a supremum $\bar x$. Let us consider sequences $r_k\to \bar x$, $q_k\to \bar x$, $r_{k}\ll r_{k+1} \ll \bar x \ll q_{k+1} \ll q_k$, then convergence $x_n\to \bar x$ in the manifold topology (which under global hyperbolicity and hence strong casuality is equivalent to the Alexandrov topology) is equivalent to: for every $k$ we have for sufficiently large $n$, $r_k\ll x_n \ll q_k$. But $x_n\ll x_{n+1} \le \bar x\ll q_k$ because $\bar x$ being a supremum is an upper bound, thus $x_n \ll q_k$ holds trivially for every $n$. As for $r_k\ll x_n$ this is precisely consequence of the abstract poset definition of $\ll$ (which in the above expressions we use interchangeably with $I$ as they coincide) \cite[Def.\ 2.5]{martin06} as $r_{k+1}\ll \bar x$ implies, as $\bar x$ is a supremum, $r_k\ll r_{k+1}\le x_n$ for some $n$ and hence for every larger $n$.

This shows how Prop.\ \ref{cnyt} implies the spacetime manifold statement ``global hyperbolicity $\Rightarrow$ future chronocompleteness''. The proof of ``global hyperbolicity $\Rightarrow$ future causalcompleteness'' is almost the same, in fact it is simpler.

\section*{\normalsize Data availability statement}
No new data were created or analysed in this study.

\section*{\normalsize Conflict of Interest statement}
The author of this publication declares no conflict of interest.

%\bibliography{../../bibliografie/simultaneity,../../bibliografie/libri,../../bibliografie/miei,../../bibliografie/mieiPrep,../../bibliografie/mieiProc}
%\bibliographystyle{plain}

\end{document}